\newtheorem{thm}{Theorem}[section]
\newtheorem{obs}{Observation}
\newtheorem{cor}[thm]{Corollary}
\newtheorem{prop}[thm]{Proposition}
\newtheorem{lem}[thm]{Lemma}
\DeclareMathOperator{\rank}{rank}
\theoremstyle{definition}
\newtheorem{exa}[thm]{Example}
\numberwithin{equation}{section}
\def\eq#1{{\rm(\ref{#1})}}
\def\Eq#1#2{\ifthenelse{\equal{#1}{*}}
  {\begin{equation*}\begin{aligned}[]#2\end{aligned}\end{equation*}}
  {\begin{equation}\begin{aligned}[]\label{#1}#2\end{aligned}\end{equation}}}
\newcommand\id{\mathrm{id}}
\def\A{\mathscr{A}}
\def\P{\mathscr{P}}
\newcommand\dd{\mathbf{d}}
\def\root{\mathcal{R}}
\newcommand\MM{\mathbf{M}}
\newcommand\KK{\mathbf{K}}
\newcommand\R{\mathbb{R}}
\newcommand\N{\mathbb{N}}
\newcommand\Z{\mathbb{Z}}
\author{P\'al Burai}
\author{Pawe{\l} Pasteczka}
\title{Mathematical model of information bubbles on networks}
\DeclareMathOperator{\Succ}{succ}
\DeclareMathOperator{\source}{source}
\DeclareMathOperator{\Prec}{prec}
\keywords{Information spreading; invariant means;  mean-type mappings; networks; information bubbles; aggregation functions}
\subjclass[2020]{{\sc Primary:} 94C15;\\ {\sc Secondary:} 05C82, 26E60, 91D30}
\begin{document}
\begin{abstract}
    The main goal of this paper to introduce a new model of evolvement of narratives (common opinions, information bubble) on networks. Our main tools come from invariant mean theory and graph theory. The case, when the root set of the network (influencers, news agencies, etc.) is ergodic is fully discussed. The other possibility, when the root contains more than one component is partially discussed and it could be a motivation for further research.
\end{abstract}

\maketitle

\section{Introduction}

Our basic object is a directed graph, which is an abstract visualization of a network. A number is assigned to all vertices, which symbolize the information or the "opinion" of the vertex (users) in question. Essentially, we distinguish to types of vertices: roots (information sources or influencers) and the others. From a root, all the other non root vertices are available. So, the opinion of a root, or other words, the information, which is delivered by an influencer has impact on non root vertices. Roots can influenced only by other (not necessarily all of them) root elements. 

We investigate the following situation. Some roots maybe cooperate (possibly all), so they somehow form a common opinion at last. The mathematical model of this are multivariable means, which aggregates all the opinions of the cooperating roots. Every vertex has an own aggregating rule, which possibly different from other vertices' rules. The question is, what happens after some time, if they repeat the aggregation process. There will be a common opinion at last or not? The key tool is the existence of invariant means, and their uniqueness. This highly depends on the structure of the incidence graph of the aggregating functions of the roots, and on the structure of the set of the roots.

\subsection{Description of the model}
For the mathematical model of an information bubble in a network (taking shape of a narrative) our main tool will be the concept of invariant means. Assume that players are not randomly milling
around, but sedentary, that is a graph of mutual relations between them does not change in time. Then they interact only with their neighbors. In other words, in our model we have a given (fixed) social graph and the aggregating function (a mean) at each node. In each step (iteration) every individual (independently) modifies his opinion by averaging neighbours' opinions.  In general, averaging takes into account the opinion of neighbours (possibly also the opinion of the individual in the previous step) and aggregates them into the opinion in the next time period. These concepts are well known in the literature, see for example Dahlgren~\cite{Dah21}, Gupta--Mishra \cite{GupMis21}, Liu--Zhang \cite{LiuZha14}, Ohtsuki--Nowak \cite{OhtNow06}, Polanski \cite{Pol24}, Yoon \cite{Yoo11}, and Zhang--Fu--Chen--Xie--Wang \cite{ZhaFuChe15}.
However, most of the models assume implicitly that individuals possess a very limited number of opinions (strategies). Then each vertex of the social graph choose a corresponding opinion (strategy) from this limited set. 

In this paper, we would like to generalize this model in a few directions. First of all, we allow continuous set of opinions, for simplicity, we assume that they are taken from the subinterval $I$ of reals. Next, each individual has its own aggregating function, which will be a mean defined on an interval $I$. 

The meaningful mathematical background was given in \cite{Pas23b} which was our motivation. That approach was, regretfully, too limited to cover the social networks and spreading informations. It did not allow to model the classical situations like press agencies, influencers, media, etc. 

Our new results model a more general situation where we have roots of the information (which describe a sort of privileged position in the social graph) and peers, which are (indirectly) connected to (a few or all) roots via the social graph. The aggregation is isolated from the external data except for fixing the initial values. The (social) network is represented by a directed graph. The issue is to obtain the total agreement at the very end. Such a state is not expected in the real world, on the other hand, it is not very surprising when we take into account that no new information is delivered to the network and the only evolution of opinions is caused by aggregating the opinions of others. It turned out that such convergence of iterations (taking shape of a narrative) is naturally connected to the notion of invariant means.

We assume that roots are connected (possibly indirectly) to each other, which is clearly necessary to obtain the global agreement in the limit process. This assumption is not artificial since in most cases roots are aware one of another -- directly or not. 
Moreover, we make a purely technical assumption that the graph induced by roots is aperiodic (which is satisfied by nearly all realistic social networks). Indeed, it is quite natural to assume that the root of a social network is aperiodic, for example, it holds if it contains a totally connected subgraph with at least three vertexes.

\subsection{Outline of the paper}

The second section contains the needed definitions and tools. We introduce here the concepts of means, mean-type mappings, invariant means, and the root set of a directed graph, which is one of the key concepts of this paper. The characterization theorem of the root set is also there.

The third section contains the main results of this work. The case, when the root set is ergodic, is completely investigated.  The ergodic property of the root entails the existence of a unique invariant mean. This can be considered as a common narrative of the network. It is also proved here, that we can expect such a simple situation only when the root is ergodic. 

In the fourth section, we provide illustrations, including a significant special scenario where the averaging of all users and influencers involves weighted arithmetic means. In this case, discrete Markov chains (which are essentially matrices) can be employed to compute the limit, resulting in an invariant mean.

We close the paper with conclusions and the summary of possible further research directions.

\section{Preliminaries}
\subsection{Mathematical tools from graph theory and the concept of the root}

Now we recall some elementary facts concerning graphs. For details, we refer the reader to the classical book \cite{GraKnuPat89}.

A \emph{digraph} is a pair $G=(V,E)$, where $V$ is a finite (possibly empty) set of vertexes, and $E\subset V \times V$ is a set of edges. For each $v \in V$ we denote by $N_G^-(v)$ and $N_G^+(v)$ sets of \emph{in-neighbors} and \emph{out-neighbors}, respectively. More precisely $N_G^-(v)=\{w \in V \colon (w,v)\in E\}$ and $N_G^+(v)=\{w \in V \colon (v,w)\in E\}$.  The edges of the form $(v,v)$ for $v \in V$ are called \emph{loops}. Let us observe that in view of the above definition the null graph (empty graph) $\varnothing:=(\emptyset,\emptyset)$ a well-defined digraph. 

A sequence $(v_0,\dots,v_n)$ of elements in $V$ such that $(v_{i-1},v_{i})\in E$ for all $i \in \{1,\dots,n\}$ is called a \emph{walk} from $v_0$ to $v_n$. The number $n$ is a \emph{length} of the walk. If for $v, w \in V$ there exists a walk from $v$ to $w$ in $G$, then we denote it by $v \leadsto_G w$ (abbreviated to $v \leadsto w$ whenever $G$ is known). A graph $G$ is called \emph{irreducible} provided $v \leadsto w$ for all $v, w \in V$. 

A \emph{cycle} in a graph is a non-empty walk in which only the first and last vertices are equal. A directed graph is said to be \emph{aperiodic} if there is no integer $k > 1$ that divides the length of every cycle of the graph. A~graph which in nonempty, irreducible and aperiodic is called \emph{ergodic}.

A topological ordering of a digraph $G=(V,E)$ is a linear ordering of its vertices such that for every directed edge $(v,w) \in E$, $v$ comes before $w$ in the ordering. It is known that if $G$ has no cycles, then there exists its topological ordering (see, for example, \cite[section~22.4]{CorLeiRivSte09}). Obviously, it is not uniquely determined.

We also need a lemma which will be useful in the remaining part of this paper
\begin{lem}[\!\!\cite{Pas23b}, Lemma~1] \label{lem:walkLB}
Let $G=(V,E)$ be an ergodic digraph. Then there exists $q_0$ such that for all $q \ge q_0$, and $v,w \in V$ there exists a walk from $v$ to $w$ of length exactly $q$.
\end{lem}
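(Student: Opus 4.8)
The plan is to reduce the statement to a classical fact about numerical semigroups, after extracting the arithmetic of walk lengths from the combinatorial hypotheses. I would fix once and for all a vertex $v_0 \in V$ and consider
\[
  S := \{ n \in \N \colon \text{there is a closed walk from } v_0 \text{ to } v_0 \text{ of length } n \}.
\]
The first observation is that $S$ is closed under addition: concatenating two closed walks at $v_0$ produces another, whose length is the sum. Since $G$ is ergodic it is nonempty and must contain at least one cycle (an acyclic digraph is never aperiodic, as every $k>1$ vacuously divides the length of every cycle), so $S$ contains a positive integer. Thus $S$ is a sub-semigroup of $(\N,+)$ with a positive element.

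The key step is to show that $\gcd(S)=1$, which is exactly a reformulation of aperiodicity. The relevant fact is that in an irreducible digraph the greatest common divisor of the lengths of all closed walks through a fixed vertex is independent of the vertex and equals the gcd of all cycle lengths, i.e. the period of $G$; aperiodicity says precisely that this gcd is $1$. I would either cite this or verify it directly: using irreducibility I can transport any cycle through a vertex $u$ to a closed walk at $v_0$ by prepending a walk $v_0 \leadsto u$ and appending a walk $u \leadsto v_0$, which shows that $\gcd(S)$ divides the length of every cycle, hence divides the overall gcd of cycle lengths; the reverse divisibility is immediate. I expect this middle step to be the main obstacle, since it rests on the nontrivial fact that the period is vertex-independent; the remaining parts are routine.

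Once $\gcd(S)=1$, I invoke the numerical-semigroup principle (the Chicken McNugget / Frobenius result): a subset of $\N$ closed under addition whose elements have gcd $1$ contains every sufficiently large integer. Concretely, picking finitely many $s_1,\dots,s_k \in S$ with $\gcd(s_1,\dots,s_k)=1$ and applying B\'ezout's identity yields a threshold $q_1$ with $\{q_1, q_1+1, \dots\} \subseteq S$; hence for every $q \ge q_1$ there is a closed walk at $v_0$ of length exactly $q$.

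It remains to remove the dependence on $v_0$ and pass to arbitrary endpoints. For each $v \in V$ fix, by irreducibility, a walk from $v$ to $v_0$ of length $a_v$ and a walk from $v_0$ to $v$ of length $b_v$; as $V$ is finite, $A:=\max_v a_v$ and $B:=\max_v b_v$ are finite. Given any $v,w \in V$ and any $q \ge q_0 := q_1 + A + B$, I concatenate the chosen walk $v \leadsto v_0$ of length $a_v$, a closed walk at $v_0$ of length $q - a_v - b_w$ (which exists, since $q - a_v - b_w \ge q_1$), and the chosen walk $v_0 \leadsto w$ of length $b_w$, obtaining a walk from $v$ to $w$ of length exactly $q$. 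This produces the desired uniform threshold $q_0$ and completes the argument.
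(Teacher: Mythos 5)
The paper does not actually prove this lemma --- it is imported from \cite{Pas23b} (Lemma~1) as a black box, so there is no in-paper argument to compare yours against. Your proof is correct and is the standard one for this classical fact (primitivity of an ergodic digraph): form the additive semigroup $S$ of closed-walk lengths at a fixed $v_0$, show $\gcd(S)=1$ from aperiodicity via irreducibility, invoke the Frobenius/Schur principle that a numerical semigroup with $\gcd$ equal to $1$ contains all sufficiently large integers, and then pad with fixed connecting walks to obtain a uniform threshold $q_0=q_1+A+B$ for arbitrary endpoints; that last step is exactly right. Two small points to make explicit if you write this up in full. First, the cycle guaranteed by aperiodicity need not pass through $v_0$, so already the claim that $S$ contains a positive element requires the transport argument (prepend a walk $v_0\leadsto u$, append a walk $u\leadsto v_0$), not just the existence of a cycle somewhere in $G$. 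Second, transporting a cycle of length $\ell$ through $u$ only shows $a+\ell+b\in S$; to conclude that $\gcd(S)$ divides $\ell$ you also need the observation that $a+b\in S$ (the closed walk that skips the cycle), so that $\gcd(S)$ divides the difference. Both are one-line repairs and do not affect the soundness of the overall argument.
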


Let us now introduce the decomposition of a directed graph into a directed acyclic graph of its strongly connected components (see, for example, \cite[section~22.5]{CorLeiRivSte09}). More precisely, for a directed graph \mbox{$G=(V,E)$} we define a relation $\sim$ on its vertexes in the following way: $v \sim w$ if and only if they are both in the same strongly connected component (that is $v=w$ or there is a walk from $v$ to $w$ and from $w$ to $v$).  Obviously $\sim$ is an equivalence relation on $V$, thus we define the quotient graph $G^{SCC}:=G/_{\sim}$.
In more details $G^{SCC}=(V^{SCC},E^{SCC})$, where $V^{SCC}=V/_\sim$ and
\begin{eqnarray*}
    E^{SCC}=\big\{ (P,Q)\in  V^{SCC}  \times V^{SCC} \colon P \ne Q \text{ and }(p,q) \in E\\ \text{ for some } p \in P\text{ and }q\in Q\big\}.
\end{eqnarray*}

It can be shown that $G^{SCC}$ has no cycles. Now we define the set of sources a directed graph $G=(V,E)$ as follows
\begin{eqnarray*}
    \source(G):=\{v \in V \colon \text{ there is no edge in }E\text{ which ends in }v\}\\
    =\{v \in V \colon N_G^-(v)=\emptyset\}.
\end{eqnarray*}

Obviously, there are no edges between elements in the source. Furthermore, since $G^{SCC}$ is acyclic, that is, it has no cycles in $G^{SCC}$, we know that $\source(G^{SCC})$ is nonempty. In fact, it contains the first element of (any) topological ordering of $G^{SCC}$ (see the definition above).
In the next step, we go backward (to the initial graph $G$) and define the \emph{root of $G$} by 
\begin{equation}\label{D:definition of the root set}
    R(G):=\bigcup \source (G^{SCC}) \subset V.
\end{equation}

\begin{exa}
Let 
\Eq{*}{
G&=\{V,E\},\qquad V=\{a,b,c,d,e,f\},\\
E&=\{(a,d),(d,a),(b,c),(c,b),(d,e),(b,e),(e,f),(c,f)\}.
}
Then 
\Eq{*}{
G^{SCC}&=\{V^{SCC},E^{SCC}\},\qquad V^{SCC}=\{P,Q,R,S\},\\
E^{SCC}&=\{(P,R),(Q,R),(Q,S),(R,S)\},
}
where the equivalence classes $P,Q,R,S$ correspond to the sets $\{a,d\},$ $\{b,c\},\{e\},\{f\}$
respectively (see Figure \ref{F:exampleforroot}).

So, the source of $G^{SCC}=\{P,Q\}$, which entails that the root of $G$ is 
\[
R(G)=\{a,b,c,d\}.
\]
\begin{figure}[h]
\centering
\begin{tikzpicture}[>={Stealth[width=6pt,length=9pt]}, node distance=15mm,auto]
\filldraw [fill=black!10,draw=white] (-0.5,-0.5) rectangle (0.5,2.5);
\filldraw [fill=black!10,draw=white] (1.5,1.5) rectangle (4.5,2.5);
\begingroup
\newcommand{\dnode}{\node[state,inner sep=3pt,minimum size=1pt]}
\dnode (a) at (0,2) {$a$};
\dnode (b) at (2,2) {$b$};
\dnode (c) at (4,2) {$c$};
\dnode (d) at (0,0) {$d$};
\dnode (e) at (2,0) {$e$};
\dnode (f) at (4,0) {$f$};

\dnode (P) at (8,2) {$P$};
\dnode (Q) at (10,2) {$Q$};
\dnode (R) at (8,0) {$R$};
\dnode (S) at (10,0) {$S$};

\path[->] (a) edge [bend left] node {} (d);
\path[->] (d) edge [bend left] node {} (a);
\path[->] (b) edge [bend left] node {} (c);
\path[->] (c) edge [bend left] node {} (b);
\path[->] (d) edge node {} (e);
\path[->] (b) edge node {} (e);
\path[->] (c) edge node {} (f);
\path[->] (e) edge node {} (f);

\path[->] (P) edge node {} (R);
\path[->] (Q) edge node {} (R);
\path[->] (Q) edge node {} (S);
\path[->] (R) edge node {} (S);
\endgroup

\end{tikzpicture}
\caption{A directed graph $G$ and the corresponding $G^{SCC}$}
\label{F:exampleforroot} 
\end{figure}
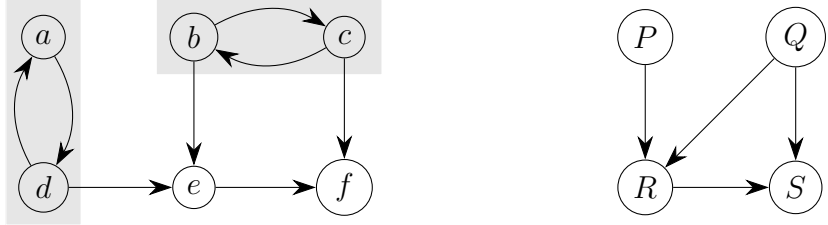
Here the root is not ergodic.
\end{exa}

\begin{exa}\label{ex2}
Let 
\Eq{*}{
G&=\{V,E\},\qquad V=\{1,2,3,4\},\\
E&=\{(1,1),(2,2),(4,4),(1,2),(2,1),(2,3),(3,4),(4,3)\}.
}
Then 
\[
G^{SCC}=\{V^{SCC},E^{SCC}\},\qquad V^{SCC}=\{I,II\},\quad E^{SCC}=\{(I,II)\},
\]
where the equivalence classes $I,II$ correspond to the sets $\{1,2\},\{3,4\}$ respectively (see Figure~\ref{fig2}).

So, the source of $G^{SCC}$ equals to $\{I\}$, which entails that the root of $G$ is 
\[
R(G)=\{1,2\}.
\]
Here the root is ergodic.
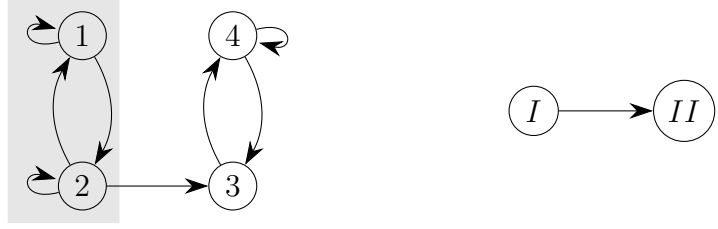
\begin{figure}[h]
\centering
\begin{tikzpicture}[>={Stealth[width=6pt,length=9pt]},
node distance=15mm,auto]
\filldraw [fill=black!10,draw=white] (-1,-0.5) rectangle (0.5,2.5);
\node[state,inner sep=3pt,minimum size=1pt] (b) at (0,0) {$2$};
\node[state,inner sep=3pt,minimum size=1pt] (c) at (2,0) {$3$};
\node[state,inner sep=3pt,minimum size=1pt] (d) at (2,2) {$4$};
\node[state,inner sep=3pt,minimum size=1pt] (a) at (0,2) {$1$};
\path[->] (a) edge [bend left] node {} (b);
\path[->] (a) edge [loop left] node {} (a);
\path[->] (b) edge [bend left] node {} (a);
\path[->] (b) edge [loop left] node {} (b);
\path[->] (b) edge node {} (c);
\path[->] (c) edge [bend left] node {} (d);
\path[->] (d) edge [bend left] node {} (c);
\path[->] (d) edge [loop right] node {} (d);

\node[state,inner sep=3pt,minimum size=1pt] (I) at (6,1) {$I$};
\node[state,inner sep=3pt,minimum size=1pt] (II) at (8,1) {$II$};
\path[->] (I) edge node {} (II);
\end{tikzpicture}
\caption{Graph $G_\alpha$ related to Example~\ref{ex2}.}
\label{fig2}\end{figure}

%
%
%
\end{exa}
Now we show the equivalent definition of the root.
\begin{thm}[Characterization theorem of $R(G)$]\label{T:characterization_of_the_root}
Let $G=(V,E)$ be a directed graph. Then $R(G)$ is the smallest subset $S \subset V$ such that the following conditions are valid:
\begin{enumerate}[(i)]
    \item For all $v\in V$ such that there exists a path $w \leadsto_G v$ for some $w \in S$;
    \item if $v \in S$ and there is an edge $(wv)\in E$ for some $w \in V$, then $w \in S$.
\end{enumerate}
\end{thm}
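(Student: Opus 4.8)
The plan is to reformulate the two conditions and then establish the two inclusions separately. First I would rephrase (i) as the statement that every vertex of $V$ is reachable from $S$ (that is, for every $v \in V$ there is some $w \in S$ with $w \leadsto_G v$), and (ii) as the statement that $S$ is closed under taking in-neighbors, i.e.\ $N_G^-(v) \subseteq S$ whenever $v \in S$; equivalently, no edge of $E$ enters $S$ from $V \setminus S$. With this reading the assertion splits into two tasks: (A) show that $R(G)$ itself satisfies (i) and (ii), and (B) show that every $S$ satisfying (i) and (ii) contains $R(G)$.

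For (A) I would rely on the fact that $R(G)$ is the union of the source components, i.e.\ those $P \in \source(G^{SCC})$ with $N_{G^{SCC}}^-(P)=\emptyset$. For (ii): if $v \in R(G)$ lies in a source component $P$ and $(w,v)\in E$, then either $w \in P$, or $w$ lies in another component $Q$ with $(Q,P)\in E^{SCC}$; the latter is impossible since $P$ is a source in $G^{SCC}$, so $w\in P\subseteq R(G)$. For (i): given any $v\in V$ in a component $Q$, I would trace back in-edges in the finite acyclic graph $G^{SCC}$ from $Q$ until reaching a source $P$, obtaining a path $P\leadsto Q$ in $G^{SCC}$; lifting this path to $G$ (joining the inter-component edges through walks inside each component, which exist by strong connectivity) yields a walk in $G$ from some vertex of $P\subseteq R(G)$ to $v$. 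Hence every $v$ is reachable from $R(G)$.

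For (B), let $S$ satisfy (i) and (ii); I would prove $P \subseteq S$ for every source component $P$, which gives $R(G)\subseteq S$. Fix such a $P$ and any $v \in P$. By (i) there is a walk from some $s \in S$ to $v$. Because $P$ is a source component it has no edges entering it from outside, so reading this walk backwards from $v$ one checks inductively that every vertex on it lies in $P$; in particular its starting vertex satisfies $s \in P$, whence $S \cap P \neq \emptyset$. Now take an arbitrary $p \in P$. By strong connectivity of $P$ there is a walk $p = w_0 \to w_1 \to \cdots \to w_m = s$. Starting from $w_m = s \in S$ and applying (ii) repeatedly along the edges $(w_{i-1},w_i)\in E$, I obtain $w_{m-1}\in S$, then $w_{m-2}\in S$, and finally $p=w_0\in S$. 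Thus $P\subseteq S$.

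The routine parts are the reformulation and direction (A); the crux is the minimality argument (B), where two ideas must cooperate: condition (i) forces $S$ to touch each source component, while the source property prevents $S$ from touching it via an external walk, so the contact point lies in $P$ itself; then condition (ii), which propagates membership \emph{backwards} along edges, is exactly what spreads membership across the strongly connected $P$ from that single contact point. I expect the main obstacle to be stating cleanly that a walk ending inside a source component stays entirely inside it, and that backward propagation via (ii) along a walk inside $P$ recovers all of $P$ --- both of which hinge on correctly matching the forward direction of reachability in (i) against the backward direction of closure in (ii).
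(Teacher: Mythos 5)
Your proof is correct. It follows the same overall skeleton as the paper's argument --- establish that $R(G)$ satisfies (i) and (ii), then show minimality by locating a point of $S$ inside each source component and propagating membership backwards via (ii) along walks --- but the intermediate machinery is different. The paper first proves an order-theoretic characterization: it introduces the strict partial order $p \prec q$ iff $p\ne q$, $p \leadsto_G q$ and $q \not\leadsto_G p$, shows that $R(G)$ is exactly the set of $\prec$-minimal elements of $V$, and then derives both inclusions from minimality (in particular, in the minimality step it deduces $v \leadsto_G v^*$ from $v^* \not\prec v$ and $v^* \leadsto_G v$, and then propagates backwards along that walk). You instead work directly with the SCC quotient: your key lemma is that a walk terminating inside a source component of $G^{SCC}$ must lie entirely within that component, which pins the contact point $s \in S$ inside $P$ before you propagate across $P$ by strong connectivity. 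The paper's route buys a reusable reformulation of $R(G)$ as a set of minimal elements (which it exploits again implicitly elsewhere), while yours is more self-contained and makes the role of the ``no edges enter a source component'' fact explicit; both hinge on the same observation that (ii) propagates membership backwards along edges, so that one representative of $S$ in a strongly connected component forces the whole component into $S$.
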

\begin{proof}
This proof is split into three steps.

\textsc{Step 1.} 
 We prove that the root set is a set of minimal elements with respect to a certain ordering in V.

Let us introduce the relation on $V$ as follows 
\Eq{*}{
p \prec q :\iff p\ne q,\ p \leadsto_G q \text{ and }q \not \leadsto_G p.
}

First, observe that $\prec$ is a strict partial ordering of elements in $V$. 

Second, if $p$ and $q$ are in the same SCC then $p \not \prec q$. Whence each element of $R(G)$ is minimal with respect to $\prec$ in $V$. 

Conversely, if $p$ is not a minimal element (with respect to $\prec$) then there exists a $\bar p$ such that $\bar p \prec p$. Then $\bar p$ is in the different SCC than $p$ and there exists a path $\bar p \leadsto p$. Consequently $[p]_\sim \notin \source(G^{SCC})$, and whence $p \notin R(G)$.  That is, $R(G)$ is exactly the set of minimal elements in the ordering $\prec$. 

\textsc{Step 2.} 
We show that conditions (i) and (ii) holds for $S=R(G)$.

Indeed, for every $v \in V$ there exists $w \in R(G)$ such $w \prec v$, which implies that condition (i) holds for $S=R(G)$.

Now assume that $v \in R(G)$ and $(wv) \in E$ for some $w \in G \setminus\{v\}$. Then, since $v$ is minimal we have $w \not \prec v$. Whence one of three cases holds: $w=v$ (which we can exclude), $w \not \leadsto_G v$ (which is impossible since $(wv) \in E$) or $v \leadsto_G w$ which implies that $v$ and $w$ are in the same SCC. Whence $w \in R(G)$, which implies that (ii) holds for $S=R(G)$.

\textsc{Step 3.} 
Now take any set $S \subset V$ such that condition (i) and (ii) holds. Observe that if $v \in S$ and $w \prec v$ then $w \leadsto_G v$ and (applying condition (ii) inductively) we get $w \in S$. 

Now take any $v \in R(G)$. Applying condition (i), there exists $v^* \in S$ such that $v^* \leadsto_G v$. Then, since $v^* \not\prec v$ we have that $v=v^*$ or  $v \leadsto_G v^*$, and therefore $v \in S$. Thus $R(G) \subseteq S$.
\end{proof}

Now, we define a \emph{root graph} $\root(G)$ as the graph induced by the root of $G$. Thus, purely formally, $\root(G):=(R(G),E\cap (R(G) \times R(G))$.

We underline a few easy observations related to this definition. 
\begin{obs}
\begin{enumerate}
\item Since $\source(G^{SCC})$ is nonempty, we get that $\root(G)$ is nonempty if $V$ is nonempty.
\item A graph $G$ is irreducible if, and only if, all its vertexes belong to the root, that is $\root(G)=G$.
\item There are no edges in $G$ that start outside the root and end inside it. 
\item $R(G)$ is a union of irreducible graphs. Consequently, $R(G)$ is irreducible if, and only if, $\source(G^{SCC})$ is a singleton.
\end{enumerate}
\end{obs}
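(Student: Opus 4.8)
The plan is to read all four statements off the strongly connected component decomposition, using the characterization in Theorem~\ref{T:characterization_of_the_root} wherever it shortens the argument, and to order them so that later parts reuse earlier ones. I would begin with~(1): if $V\neq\emptyset$ then $V^{SCC}\neq\emptyset$, and since $G^{SCC}$ is acyclic it has a topological ordering whose first vertex has no in-neighbor, so $\source(G^{SCC})\neq\emptyset$ (this is the fact already noted when $R(G)$ was introduced). Every element of $\source(G^{SCC})$ is a nonempty equivalence class, hence $R(G)=\bigcup\source(G^{SCC})$ is a nonempty subset of $V$ and the induced graph $\root(G)$ is nonempty.

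Statement~(3) is essentially a restatement of condition~(ii) in Theorem~\ref{T:characterization_of_the_root}, which says exactly that $v\in R(G)$ and $(w,v)\in E$ force $w\in R(G)$; I would either cite it directly or give the one-line direct proof, namely that an edge $(w,v)$ with $v\in R(G)$ and $w\notin[v]_\sim$ would induce an edge $[w]_\sim\to[v]_\sim$ in $G^{SCC}$, contradicting $[v]_\sim\in\source(G^{SCC})$.

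For~(4) I would first observe that each $P\in\source(G^{SCC})$ is a strongly connected component, so the subgraph of $G$ induced on $P$ is irreducible, the required mutual walks being realizable inside $P$ (any vertex on a walk between two members of $P$ is again mutually reachable with them, hence lies in $P$). Then I would check that two distinct source components are disconnected in $\root(G)$: an edge between $P,Q\in\source(G^{SCC})$ would create $P\to Q$ or $Q\to P$ in $G^{SCC}$, again contradicting that both are sources. Hence $\root(G)$ is a disjoint union of the irreducible graphs induced by the source components, which proves the first assertion; the ``consequently'' clause follows because such a disjoint union is irreducible precisely when there is a single summand, and with two or more summands vertices in different components are not mutually reachable.

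It remains to treat~(2), where I expect the only genuine difficulty. The forward implication is immediate: if $G$ is irreducible then $V$ is a single strongly connected component, $G^{SCC}$ reduces to one edgeless vertex which is a source, so $R(G)=V$ and $\root(G)=G$. The reverse implication is more delicate, because $\root(G)=G$ yields only $R(G)=V$, i.e.\ that \emph{every} SCC is a source; by the analysis in~(4) this merely makes $G$ a disjoint union of irreducible pieces, which is irreducible only when there is exactly one piece. Thus the stated equivalence implicitly requires $\source(G^{SCC})$ to be a singleton, equivalently that $G$ be weakly connected, and I would close the argument either by adding this hypothesis or by invoking the paper's standing assumption that the roots are (indirectly) connected; under that proviso $R(G)=V$ with a single source component forces a single SCC and hence irreducibility.
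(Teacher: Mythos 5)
The paper states this as an ``Observation'' and supplies no proof at all, so there is nothing to compare your argument against; judged on its own, your treatment of (1), (3) and (4) is correct and is exactly the intended reading of the SCC decomposition. In particular you are right that (3) is just condition (ii) of Theorem~\ref{T:characterization_of_the_root} specialized to $S=R(G)$, and your care in (4) about why mutual reachability in $G$ gives mutual reachability \emph{inside} a source component (intermediate vertices of a closed walk through $P$ are forced into $P$) is a point worth making explicit.

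Your remark on (2) is a genuine catch rather than a gap in your proof: as literally stated, the reverse implication is false --- take $V=\{1,2\}$, $E=\emptyset$; then every SCC is a source, so $\root(G)=G$, yet $1\not\leadsto 2$. The condition $R(G)=V$ only says that $G^{SCC}$ has no edges, i.e.\ that $G$ is a disjoint union of irreducible pieces, and one needs an extra hypothesis (that $\source(G^{SCC})$ is a singleton, or weak connectedness of $G$) to conclude irreducibility. One small imprecision: ``$\source(G^{SCC})$ is a singleton'' and ``$G$ is weakly connected'' are not equivalent for general digraphs (a DAG shaped like $a\to c\leftarrow b$ is weakly connected with two sources); they coincide only under the additional assumption $R(G)=V$, which is the situation you are in, so your argument still closes. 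It would be cleaner to state the fix as: $G$ is irreducible if and only if $\root(G)=G$ \emph{and} $\source(G^{SCC})$ is a singleton, the latter being automatic by part (4).
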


A subset of the roots for which the generated graph is irreducible is called a \emph{component} of the root. 

The simplest situation is when the root graph is ergodic. In particular, there is only one component of the root set. We will see later, that in this case the effect of the common opinion of the root elements, which exists in this case,  will be the common opinion of the whole network (see Theorem \ref{thm:main}).

 The more challenging case is, when there is more than one components of $R(G)$. This can happen when there is no full cooperation in the root set. This issue is illustrated in Example~\ref{exa:E1}.

\subsection{Means, mean-type mappings and invariant means} Before we proceed further recall that, for a given $p\in \N$ and an interval $I \subset \R$, a \emph{$p$-variable mean on $I$} is an arbitrary function $M \colon I^p \to I$ satisfying the inequality
\Eq{E:MP}{
\min(x)\le M(x)\le\max(x)\text{ for all }x \in I^p.
} 
Property \eq{E:MP} is referred as a \emph{mean property}. If the inequalities in \eq{E:MP} are strict for every nonconstant vector $x$, then we say that a mean $M$ is \emph{strict}. Moreover, for such objects, we define natural properties like continuity, symmetry (when the value of mean does not depend on the order of its arguments), monotonicity (which states that $M$ is nondecreasing in each of its variables), etc. 

A mean-type mapping is a selfmapping of $I^p$ which has a $p$-variable mean on each of its coordinates. More precisely, $\MM \colon I^p \to I^p$ is called a \emph{mean-type mapping} if $\MM=(M_1,\dots,M_p)$ for some $p$-variable means $M_1,\dots,M_p$ on $I$. In this framework, a function $K \colon I^p\to \R$ is called \emph{$\MM$-invariant} if it solves the functional equation $K \circ \MM=K$. Usually, we restrict solutions of this equation to the family of means and say about \emph{$\MM$-invariant means}.
Several authors studied invariant means during years, let us just mention the book by Borwein--Borwein \cite{BorBor87} and a comprehensive survey paper Jarczyk--Jarczyk \cite{JarJar18} and the references therein.

For a given $d,p \in \N$, a sequence 
\[
\alpha:=(\alpha_1,\dots,\alpha_d) \in \{1,\dots,p\}^d,
\]
and a $d$-variable mean $M \colon I^d\to I$ we define the mean $M^{(p;\alpha)}\colon I^p\to I$
by \Eq{E:parmean}{
M^{(p;\alpha)}(x_1,\dots,x_p):=M(x_{\alpha_1},\dots,x_{\alpha_d}) \text{ for all }(x_1,\dots,x_p)\in I^p.
}
\begin{exa}
Let $d=2$ and  $\A\colon \R^2\to \R$ be the bivariate arithemetic mean, $p \ge 3$ and $\alpha=(2,3)$ then $\A^{(p;\alpha)}\colon I^p \to I$ is given by 
\Eq{*}{
\A^{(p;\alpha)}(x_1,\dots,x_p)=\A^{(p;2,3)}(x_1,\dots,x_p)=\tfrac{x_2+x_3}2\text{ for all }(x_1,\dots,x_p)\in I^p.
}
\end{exa}

For the sake of completeness, let us introduce formally $\N:=\{1,\dots\}$, and $\N_p:=\{1,\dots,p\}$ (where $p\in \N$). Then, for $p \in \N$ and a vector $\dd=(d_1,\dots,d_p)\in \N^p$, let $\N_p^\dd:=\N_p^{d_1}\times\dots\times \N_p^{d_p}$.
Using this notations, a sequence of means $\MM=(M_1,\dots,M_p)$ is called \emph{$\dd$-averaging} mapping on~$I$ if each $M_i$ is a $d_i$-variable mean on $I$. 

For a $\dd$-averaging  mapping (for an example see Example \ref{exa:Pas23b-5}) $\MM$ and a vector of indexes $\alpha =(\alpha_1,\dots,\alpha_p)\in \N_p^{d_1}\times\dots\times \N_p^{d_p}=\N_p^\dd$ define a mean-type mapping $\MM_\alpha \colon I^p \to I^p$ by 
\Eq{*}{
\MM_\alpha:=\Big(M_1^{(p;\alpha_1)},\dots,M_p^{(p;\alpha_p)}\Big);
}
recall that $M_i^{(p,\alpha_i)}$-s were defined in \eq{E:parmean}. In the more explicit form we have
\Eq{*}{
\MM_\alpha(x_1,\dots,x_p)&=\Big(M_i^{(p,\alpha_i)}(x_1\dots,x_p)\Big)_{i=1}^p\\
&=\Big(M_i\big(x_{\alpha_{i,1}},\dots,x_{\alpha_{i,d_i}}\big)\Big)_{i=1}^p\\
&=\Big(M_1\big(x_{\alpha_{1,1}},\dots,x_{\alpha_{1,d_1}}\big),\dots,M_p\big(x_{\alpha_{p,1}},\dots,x_{\alpha_{p,d_p}}\big)\Big).
}

For a given $p\in\N$, $\dd=(d_1,\dots,d_p)\in \N^p$, and $\alpha \in \N_p^\dd$, we define the \emph{$\alpha$-incidence} graph $G_\alpha=(V_\alpha,E_\alpha)$ as follows: $V_\alpha:=\N_p$ and $E_\alpha:=\{(\alpha_{i,j},i) \colon i \in \N_p \text{ and }j \in \N_{d_i}\}$.

For the readers convenience and for the better understandability of the paper we recall two results from \cite{Pas23b} and \cite{MatPas21}, which will used later.

\begin{thm}[\!\!\cite{Pas23b}, Theorem~2 (a)-(d)]\label{thm:2}
Let $I \subset \R$ be an interval, $p \in \N$, $\dd \in \N^p$, $\alpha \in \N_p^\dd$, and $\MM=(M_1,\dots,M_p)$ be a $\dd$-averaging mapping on $I$. 
Assume that $G_\alpha$ is an ergodic graph, and $M_i$-s are continuous and strict for all $i \in \{1,\dots,p\}$.

There exists the unique, continuous, and strict \mbox{$\MM_\alpha$-invariant} mean $K_\alpha \colon I^p \to I$.
Moreover $\lim\limits_{n\to\infty}\MM_\alpha^n=\KK_\alpha$, where 
$\KK_\alpha \colon I^p \to I^p$ is defined as $\KK_\alpha=(K_\alpha,\dots,K_\alpha)$.
\end{thm}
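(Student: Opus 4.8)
The plan is to track the coordinatewise oscillation of the iterates and to prove that it tends to zero. Fix $x\in I^p$. By the mean property \eq{E:MP} every coordinate of $\MM_\alpha(y)$ lies between $\min$ and $\max$ of the coordinates of $y$, so all iterates $\MM_\alpha^n(x)$ remain in the compact cube $[\min x,\max x]^p$; thus we may work on a fixed compact subinterval $J\subset I$. Put $a_n(x):=\min_i(\MM_\alpha^n x)_i$ and $b_n(x):=\max_i(\MM_\alpha^n x)_i$. The mean property forces $a_n(x)$ to be nondecreasing and $b_n(x)$ to be nonincreasing in $n$, and both are bounded, hence convergent to some $a_\infty(x)\le b_\infty(x)$. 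The whole theorem reduces to showing $a_\infty(x)=b_\infty(x)$: the common value will be $K_\alpha(x)$, and the squeeze $a_n\le(\MM_\alpha^n x)_i\le b_n$ will give $\MM_\alpha^n(x)\to(K_\alpha(x),\dots,K_\alpha(x))$.

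The heart of the argument, and the step I expect to be the main obstacle, is a rigidity lemma tying strictness to the ergodicity of $G_\alpha$: if $y\in J^p$, $q\ge q_0$ (with $q_0$ from Lemma~\ref{lem:walkLB}), and $\min(\MM_\alpha^j y)=\min(y)$ for every $0\le j\le q$, then $y$ is constant. I would prove this by backward tracing. Writing $\mu:=\min(y)$, some coordinate $i_0$ attains $(\MM_\alpha^q y)_{i_0}=\mu$. Since $(\MM_\alpha^q y)_{i_0}$ equals $M_{i_0}$ evaluated at the values of $\MM_\alpha^{q-1}y$ at the in-neighbours of $i_0$ in $G_\alpha$, all of which are $\ge\mu$, strictness of $M_{i_0}$ forces each such argument to equal $\mu$. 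Iterating one level at a time, every vertex $u$ joined to $i_0$ by a walk of length $\ell\le q$ satisfies $(\MM_\alpha^{q-\ell}y)_u=\mu$. Ergodicity enters exactly here: by Lemma~\ref{lem:walkLB} every vertex reaches $i_0$ by a walk of length precisely $q$, so $y_u=\mu$ for all $u$, i.e.\ $y$ is constant.

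To deduce $a_\infty=b_\infty$ I would argue by contradiction using compactness. If $a_\infty(x)<b_\infty(x)$, extract a convergent subsequence $\MM_\alpha^{n_k}(x)\to\xi\in J^p$. Since $a_n$ and $b_n$ converge and $\MM_\alpha^j$, $\min$, $\max$ are continuous, we get $\min(\MM_\alpha^{j}\xi)=a_\infty(x)$ for every $j\ge 0$; in particular $\min(\MM_\alpha^{j}\xi)=\min(\xi)$ for $0\le j\le q_0$. The rigidity lemma then forces $\xi$ to be constant, whence $a_\infty(x)=\min(\xi)=\max(\xi)=b_\infty(x)$, a contradiction. This yields the pointwise limit and lets us define $K_\alpha(x):=a_\infty(x)=b_\infty(x)$.

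It remains to collect the stated properties. The bounds $\min x\le a_n\le b_n\le\max x$ give the mean property for $K_\alpha$, and strictness follows by applying the rigidity lemma (and its $\max$-counterpart) to $x$ itself: if $K_\alpha(x)=\min(x)$ for a nonconstant $x$, monotonicity forces $\min(\MM_\alpha^j x)=\min(x)$ for all $j$, contradicting the lemma. Invariance $K_\alpha\circ\MM_\alpha=K_\alpha$ is immediate from $\MM_\alpha^{n}(\MM_\alpha x)=\MM_\alpha^{n+1}(x)$. For continuity I would use the sandwich $a_N\le K_\alpha\le b_N$ with $a_N,b_N$ continuous: given $\varepsilon>0$, pick $N$ with $b_N(x_0)-a_N(x_0)<\varepsilon/2$ and a neighbourhood of $x_0$ on which $a_N,b_N$ vary by less than $\varepsilon/4$, which bounds $|K_\alpha(x)-K_\alpha(x_0)|<\varepsilon$. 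With $K_\alpha$ continuous, Dini's theorem applies to the monotone sequences $a_n\nearrow K_\alpha$ and $b_n\searrow K_\alpha$ on compacta, so $b_n-a_n\to 0$ uniformly and $\MM_\alpha^n\to\KK_\alpha$ locally uniformly. Finally, any continuous $\MM_\alpha$-invariant mean $K$ satisfies $K=K\circ\MM_\alpha^n$; letting $n\to\infty$ and using $\MM_\alpha^n(x)\to(K_\alpha(x),\dots,K_\alpha(x))$ together with $K$ being a continuous mean gives $K(x)=K_\alpha(x)$, which proves uniqueness.
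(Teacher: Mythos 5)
The paper does not actually prove Theorem~\ref{thm:2}: it is imported verbatim from \cite{Pas23b}, so there is no in-paper proof to compare against. Judged on its own, your argument is correct and complete. It follows the classical oscillation-contraction strategy: the sequences $a_n=\min_i[\MM_\alpha^n x]_i$ and $b_n=\max_i[\MM_\alpha^n x]_i$ are monotone and bounded, and the whole content is concentrated in your rigidity lemma, whose backward-tracing induction (strictness forces every in-neighbour's value down to the minimum, and Lemma~\ref{lem:walkLB} lets a walk of length exactly $q$ reach $i_0$ from \emph{every} vertex) is exactly where ergodicity is consumed; the compactness/subsequence step then correctly converts ``$a_\infty<b_\infty$'' into a violation of that lemma. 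This is the same circle of ideas the present paper uses in Step~2 of the proof of Theorem~\ref{thm:main} (limit points, strictness of the mean at an accumulation point forcing a strict decrease), so your route is consistent in spirit with the authors' toolkit. One point worth tightening: your uniqueness argument passes $n\to\infty$ inside $K$ and therefore only rules out other \emph{continuous} invariant means. Uniqueness among all invariant means follows without continuity from the squeeze $a_n(x)\le K(\MM_\alpha^n(x))=K(x)\le b_n(x)$ together with $a_n,b_n\to K_\alpha(x)$, or immediately from Proposition~\ref{prop:InvPrinc} once the pointwise convergence $\MM_\alpha^n\to\KK_\alpha$ is in hand; this stronger form is what the rest of the paper (e.g.\ Theorem~\ref{thm:main2}) relies on. The Dini/local-uniform-convergence remark is correct but not needed for the statement as given.
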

 
\begin{prop}[Invariance principle; \cite{MatPas21}, Theorem~1] \label{prop:InvPrinc}
Let $\MM\colon I^p\to I^p$
be a mean-type mapping and $K:I^{p}\rightarrow I$ be an arbitrary mean. $K$
is a unique $\mathbf{M}$-invariant mean if and only if the sequence of
iterates $\left( \mathbf{M}^{n}\right) _{n\in \mathbb{N}}$ of the mean-type
mapping $\mathbf{M}$ converges to $\mathbf{K}:=\left( K,\dots ,K\right) $
pointwise on $I^{p}$.
\end{prop}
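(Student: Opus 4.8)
The plan is to work entirely from the mean property \eq{E:MP}, \emph{without} invoking any continuity of $\MM$. For a fixed $x\in I^p$ write $\mu_n(x):=\min(\MM^n(x))$ and $\nu_n(x):=\max(\MM^n(x))$. The one structural fact driving everything is that applying $\MM$ never enlarges the spread of the coordinates: since each coordinate of $\MM(y)$ is a mean of the coordinates of $y$, we have $\min(y)\le\min(\MM(y))$ and $\max(\MM(y))\le\max(y)$. Taking $y=\MM^n(x)$ shows that $(\mu_n(x))_n$ is nondecreasing and $(\nu_n(x))_n$ is nonincreasing, both confined to $[\min(x),\max(x)]$; hence both converge, to limits I call $\mu(x)$ and $\nu(x)$, with $\min(x)\le\mu(x)\le\nu(x)\le\max(x)$.

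For the implication that convergence forces $K$ to be the unique invariant mean, I would argue as follows. Suppose $\MM^n\to\KK=(K,\dots,K)$ pointwise. Invariance comes from an index shift: evaluating the pointwise limit at $\MM(x)$ gives $\MM^{n+1}(x)=\MM^n(\MM(x))\to(K(\MM(x)),\dots,K(\MM(x)))$, while the same limit at $x$ gives $\MM^{n+1}(x)\to(K(x),\dots,K(x))$, so $K\circ\MM=K$. For uniqueness, let $L$ be any $\MM$-invariant mean; then $L\circ\MM^n=L$, so the mean property applied to $\MM^n(x)$ yields $\mu_n(x)\le L(x)=L(\MM^n(x))\le\nu_n(x)$ for every $n$. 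Because convergence of $\MM^n(x)$ to a constant vector forces $\mu_n(x),\nu_n(x)\to K(x)$, the squeeze delivers $L(x)=K(x)$.

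For the converse, assume $K$ is the unique $\MM$-invariant mean; the goal is $\mu(x)=\nu(x)=K(x)$ for every $x$, since the bounds $\mu_n(x)\le(\MM^n(x))_i\le\nu_n(x)$ then force $\MM^n(x)\to(K(x),\dots,K(x))$. The crux is to recognize that the two limit functions are themselves invariant means. The bounds above show that $x\mapsto\mu(x)$ and $x\mapsto\nu(x)$ satisfy the mean property, and the identity $\min(\MM^n(\MM(x)))=\min(\MM^{n+1}(x))=\mu_{n+1}(x)$, read in the limit, gives $\mu(\MM(x))=\mu(x)$, i.e. $x\mapsto\mu(x)$ is $\MM$-invariant; symmetrically for $x\mapsto\nu(x)$. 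Uniqueness then collapses both functions onto $K$, so $\mu(x)=\nu(x)=K(x)$ as required.

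I expect the main obstacle to be precisely this final recognition step: verifying that the monotone limits $\mu$ and $\nu$ are genuine $\MM$-invariant means is where the uniqueness hypothesis does all the work, forcing the a priori possibly distinct means $\mu$ and $\nu$ to coincide. A pleasant feature of this route is that it needs no continuity of $\MM$: every potential interchange of limits is sidestepped, either by the monotone boundedness of $(\mu_n)_n$ and $(\nu_n)_n$ or by squeezing a competing invariant mean $L$ between them.
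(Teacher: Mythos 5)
The paper does not prove this proposition at all --- it is imported verbatim from \cite{MatPas21} (Theorem~1) and used as a black box, so there is no in-paper proof to compare against. Your argument is correct and complete in both directions: the monotonicity of $\mu_n=\min\circ\MM^n$ and $\nu_n=\max\circ\MM^n$, the identification of their limits as $\MM$-invariant means, and the squeeze via an arbitrary invariant $L$ are all sound and require no continuity of $\MM$; this is essentially the standard argument of the cited reference, so nothing is gained or lost relative to the source.
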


\section{Main result}
We aim to generalize Theorem~\ref{thm:2} in the following way
\begin{thm}\label{thm:main}
    Let $I \subset \R$ be an interval, $p \in \N$, $\dd \in \N^p$, $\alpha \in \N_p^\dd$, and $\MM=(M_1,\dots,M_p)$ be a $\dd$-averaging mapping on $I$. Assume that $\root(G_\alpha)$ is an ergodic graph, and $M_i$-s are continuous and strict for all $i \in \{1,\dots,p\}$. Then, there exists a unique and continuous \mbox{$\MM_\alpha$-invariant} mean $K_\alpha \colon I^p \to I$ such that
    \begin{equation}\label{3.D}
\lim_{n\to\infty}\MM_\alpha^n=\KK_\alpha,\qquad\mbox{(existence of a common narrative)}
    \end{equation}
    where 
    \[
    \KK_\alpha \colon I^p \to I^p,\qquad \KK_\alpha=(K_\alpha,\dots,K_\alpha),
    \]
    which depends on the root elements only. That is to say, there exists a mean $K_\alpha^* \colon I^{|R(G_\alpha)|} \to I$ such that 
    \begin{eqnarray*}
         K_\alpha(x_1,\dots,x_p)=K_\alpha^*(x_i \colon i \in R(G_\alpha)).\\
         \mbox{(the common narrative depends on only the opinion of the root)}
    \end{eqnarray*} 
\end{thm}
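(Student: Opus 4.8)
The plan is to reduce the problem to the already-settled ergodic case on the root and then to propagate the resulting agreement to the remaining vertexes.

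First I would exploit the third item of the Observation above: no edge of $G_\alpha$ starts outside $R(G_\alpha)$ and ends inside it. Hence for every $i \in R(G_\alpha)$ all the indexes $\alpha_{i,1},\dots,\alpha_{i,d_i}$ belong to $R(G_\alpha)$, so the restriction of $\MM_\alpha$ to the coordinates indexed by $R(G_\alpha)$ is a well-defined selfmapping of $I^{|R(G_\alpha)|}$ whose incidence graph is exactly $\root(G_\alpha)$. As $\root(G_\alpha)$ is ergodic and the relevant $M_i$ are continuous and strict, Theorem~\ref{thm:2} yields a unique continuous strict invariant mean $K_\alpha^* \colon I^{|R(G_\alpha)|} \to I$ for this restricted mapping, together with the convergence of its iterates to the constant tuple $(K_\alpha^*,\dots,K_\alpha^*)$. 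Since the root coordinates of $\MM_\alpha^n(x)$ coincide with the iterates of the restricted mapping applied to $(x_i\colon i \in R(G_\alpha))$, I obtain $\lim_{n\to\infty}\big(\MM_\alpha^n(x)\big)_i = c$ for every $i \in R(G_\alpha)$, where $c:=K_\alpha^*(x_i\colon i\in R(G_\alpha))$.

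Next I would control all coordinates simultaneously. Writing $x^{(n)}:=\MM_\alpha^n(x)$ and $u_n:=\max_i x^{(n)}_i$, $\ell_n:=\min_i x^{(n)}_i$, the mean property forces $\ell_n$ to be nondecreasing and $u_n$ nonincreasing, hence both convergent, say to $\ell_\infty \le u_\infty$; by the previous step $\ell_\infty \le c \le u_\infty$. The whole statement will follow once I show $\ell_\infty = u_\infty = c$: indeed this gives $\MM_\alpha^n(x) \to (c,\dots,c)$ pointwise, so defining $K_\alpha(x):=c=K_\alpha^*(x_i\colon i \in R(G_\alpha))$ produces a continuous mean depending only on the root coordinates, and Proposition~\ref{prop:InvPrinc} then certifies that $K_\alpha$ is the unique $\MM_\alpha$-invariant mean. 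The heart of the argument is to prove $u_\infty = c$ (the equality $\ell_\infty = c$ being symmetric). I would argue by contradiction, assuming $u_\infty > c$, and work with the $\omega$-limit set $\Omega$ of the orbit $(x^{(n)})$, which is nonempty and compact, satisfies $\MM_\alpha(\Omega)=\Omega$, and on which every point $y$ obeys $\max_i y_i = u_\infty$, $\min_i y_i = \ell_\infty$, and $y_i=c$ for $i\in R(G_\alpha)$. Strictness is decisive here: if $M_j$ evaluated at arguments all $\le u_\infty$ returns $u_\infty$, then all those arguments must equal $u_\infty$. Applied along a backward orbit $(y_{-k})_{k\ge0}$ in $\Omega$ (available because $\MM_\alpha$ is onto $\Omega$), this shows that the in-neighbours of any vertex attaining $u_\infty$ at time $-k$ attain $u_\infty$ at time $-k-1$, i.e. $N^-_{G_\alpha}(S_{-k}) \subseteq S_{-k-1}$, where $S_{-k}$ is the set of vertexes equal to $u_\infty$ in $y_{-k}$; each $S_{-k}$ is nonempty and, as $u_\infty>c$, disjoint from $R(G_\alpha)$. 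Finally I would invoke condition~(i) of Theorem~\ref{T:characterization_of_the_root}: any vertex of $S_0$ is reachable from some root vertex by a walk of length $L$, and tracing this walk backwards while descending through $S_0, S_{-1},\dots,S_{-L}$ places a root vertex in $S_{-L}$, contradicting $S_{-L}\cap R(G_\alpha)=\emptyset$.

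I expect the main obstacle to be exactly this last propagation step: the root forces only its own coordinates to $c$, and one must transport this agreement across the (possibly periodic) non-root strongly connected components without any contraction estimate. The backward-orbit bookkeeping on $\Omega$ together with the strictness of the means is what replaces such an estimate, and verifying the invariance $\MM_\alpha(\Omega)=\Omega$ together with the existence of backward orbits is the technical point deserving the most care.
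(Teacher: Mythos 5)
Your argument is correct, and its first half (restricting $\MM_\alpha$ to the root coordinates, which is legitimate because no edge enters $R(G_\alpha)$ from outside, and invoking Theorem~\ref{thm:2} there) coincides with Step~1 of the paper's proof. Where you genuinely diverge is in propagating the agreement to the non-root vertexes. The paper stratifies $V$ by $\rank$ (distance to the root), picks a coordinate $i_0$ of minimal rank attaining the maximal $\limsup$, and derives a contradiction quantitatively: it builds shrinking product sets $\Lambda_\varepsilon$ that eventually trap the orbit, shows that $\varepsilon\mapsto\sup\{[\MM_\alpha]_{i_0}(y)\colon y\in\Lambda_\varepsilon\}$ is continuous, and uses strictness of $M_{i_0}$ to force this supremum strictly below $u_{i_0}$ for small $\varepsilon$. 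You instead pass to the $\omega$-limit set $\Omega$ of the orbit, use its compactness and the surjectivity of $\MM_\alpha$ on $\Omega$ to extract full backward orbits, and let strictness propagate the set of coordinates attaining $u_\infty$ backward along in-neighbours until a walk from the root (condition~(i) of Theorem~\ref{T:characterization_of_the_root}) forces a root vertex to attain $u_\infty>c$, a contradiction. Both mechanisms rest on the same two ingredients --- strictness of the means and reachability of every vertex from the root --- but yours trades the paper's $\varepsilon$-bookkeeping and rank induction for the standard dynamical facts $\MM_\alpha(\Omega)=\Omega$ and the existence of backward orbits in $\Omega$; these do hold here, since the orbit stays in the compact set $[\min x,\max x]^p$ and $\MM_\alpha$ is continuous, so the step you flag as delicate goes through. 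Your version is arguably cleaner and more conceptual; the paper's avoids any explicit dynamical-systems machinery at the cost of a more intricate continuity argument.
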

\begin{proof}
Let us assume without loss of generality that $R(G_\alpha)=(1,\dots,q)$ for some $q \in\{1,\dots,p\}$. 

If $q=p$ then all vertexes of $G_\alpha$ belong to the root. Whence $G_\alpha$ is irreducible, and $G_\alpha=\root(G_\alpha)$ is aperiodic. So,$G_\alpha$ is ergodic and this theorem is implied by Theorem~\ref{thm:2}. For the remaining part of the proof we assume that $q\in\{1,\dots,p-1\}$.

For $v \in V$ define $\rank(v)$ as the distance of $v$ from the closest vertex in $R(G_\alpha)$. If $v \in R(G_\alpha)$ then we set $\rank(v):=0$. 

For $k \ge 0$ define
$V_k:=\{v \in V \colon \rank(v)\le k\}$.
Obviously, $R(G_\alpha)=V_0 \subseteq V_1 \subseteq V_2 \subseteq \cdots$ and there exists $k_0$ such that $V=V_{k_0}$.

\noindent {\sc Step 1 (Means with coordinates in $V_0$).}
Since $V_0=(1,\dots,q)$ is the root of $G_\alpha$, we obtain that all means $[\MM_\alpha]_1,\dots,[\MM_\alpha]_q$ depends on the first $q$ variables only. 
Therefore let $\pi \colon I^p \to I^q$ be the projection to the first $q$ variables. 

Thus, if we define $\MM^*=(M_1,\dots,M_q)$ and $\alpha^*=(\alpha_1,\dots,\alpha_q)$ we get
\Eq{*}{
[\MM_\alpha]_s (x)&=[\MM_\alpha]_s (x_1,\dots,x_p)=M_s^{(p;\alpha_s)}(x_1,\dots,x_p)=M_s^{(q;\alpha_s)}(x_1,\dots,x_q)\\
&=[\MM^*_{\alpha^*}]_s(x_1,\dots,x_q)=[\MM^*_{\alpha^*}]_s \circ \pi (x_1,\dots,x_p)=[\MM^*_{\alpha^*}]_s \circ \pi (x)
}
for all $s \in V_0$ and $x=(x_1,\dots,x_p) \in I^p$. If we apply this equality to all admissible $s$ we get 
$\pi \circ \MM_\alpha=\MM^*_{\alpha^*}\circ \pi$. This, by easy induction, yields
\Eq{177}{
\pi \circ \MM_\alpha^n=(\MM^*_{\alpha^*})^n\circ \pi \text{ for all }n \in \N. 
}
However $G_{\alpha^*}$ is a graph $G_\alpha$ restricted to $V_0$, whence we obtain $G_{\alpha^*}=\root(G_\alpha)$.
Since $\root(G_\alpha)$ is ergodic, by Theorem~\ref{thm:2}, there exists the unique $\MM^*_{\alpha^*}$-invariant mean $K \colon I^q \to I$ and the sequence of iterates $((\MM^*_{\alpha^*})^n)_{n=1}^\infty$ converges to $\KK^*:=(K,\dots,K) \colon I^q \to I^q$. Then, by \eq{177}, $(\pi \circ \MM_\alpha^n)_{n=1}^\infty$ converges to $\KK^* \circ \pi$. In other words
\Eq{186}{
    ([\MM_\alpha^n]_i)_{n=1}^\infty \text{ converges to }K\circ \pi \text{ on }I^p \text{ for all }i \in V_0.
}

\medskip 
\noindent {\sc Step 2 (general case).}
Take $x \in I^p$ arbitrary and set $u_i:= \limsup_{n \to \infty} [\MM_\alpha^n]_i(x)$ ($i \in\{1,\dots,p\}$).
Property \eq{186} implies
\Eq{194}{
u_i=K \circ \pi(x) \text{ for all }i \in V_0.
}

Let $i_0 \in \{1,\dots p\}$ be a number that satisfies $u_{i_0}=\max\{u_i \colon i \in \{1,\dots,p\}\}$ with the minimal rank. We show that $\rank(i_0)=0$.

\medskip 
\noindent {\sc Step 2.1 $\rank(i_0)=0$.}
Assume to the contrary that $k:=\rank(i_0)>0$. Then $i_0 \in V_k \setminus V_{k-1}$ and, since $i_0$ have a minimal rank, we get $\rho:=\max\{u_i \colon i \in V_{k-1}\}<u_{i_0}$.
Whence for all $\varepsilon\in(0,+\infty)$ there exists $n_\varepsilon$ such that
\Eq{*}{
[\MM_\alpha^n]_i(x) \le \rho+\varepsilon \text{ for all }n\ge n_\varepsilon \text{ and } i \in V_{k-1}.
}
Then we have that 
\Eq{*}{
[\MM_\alpha^n]_i(x) \in [\min x,\rho+\varepsilon] \cap I=:A_\varepsilon \text{ for all }n\ge n_\varepsilon \text{ and } i \in V_{k-1}.
}
Moreover, there exists $m_\varepsilon$ such that
\Eq{*}{
[\MM_\alpha^n]_i(x) \in [\min x,u_i+\varepsilon] \cap I \subset [\min x,u_{i_0}+\varepsilon] \cap I =:B_\varepsilon 
}
 for all $n\ge m_\varepsilon$ and  $i \in V.$
 
Clearly $A_\varepsilon \subseteq B_\varepsilon$ for all $\varepsilon>0$.  
Now for $\varepsilon \ge 0$, let us define the set $\Lambda_\varepsilon :=\prod_{i =1}^p H_\varepsilon(i) \subset I^p$, where
\Eq{*}{
H_\varepsilon(i)=\begin{cases}
    A_\varepsilon &\text{ for }i\in V_{k-1};\\
    B_\varepsilon &\text{ for }i \in V \setminus V_{k-1}.
\end{cases}
}
Then $\MM_\alpha^n(x) \in \Lambda_\varepsilon$ for all $n \ge \max(n_\varepsilon,m_\varepsilon)$.
Moreover for all $i \in \{1,\dots,p\}$ the mapping $[0,+\infty) \ni \varepsilon \mapsto H_\varepsilon(i)$ are topologically continuous. Thus, so is $[0,+\infty) \ni \varepsilon \mapsto \Lambda_\varepsilon$. Therefore, the function
\Eq{*}{
\varphi \colon [0,+\infty) \ni \varepsilon \mapsto \sup \big\{[\MM_\alpha]_{i_0}(y) \colon y \in \Lambda_\varepsilon\big\}
\in [\min(x),\infty)
}
is also continuous. But, since $\rank(i_0)=k$, there exists $j \in V_{k-1}$ such that $(j,i_0) \in E$. Equivalently, the mean $[\MM_\alpha]_{i_0}$ depends on the $j$-th variable, say $\alpha_{i_0,q}=j$ for some $q \in \{1,\dots,d_{i_0}\}$.

Therefore, for all $\varepsilon>0$, we have
\Eq{*}{
\varphi(\varepsilon)&= \sup \big\{[\MM_\alpha]_{i_0}(y) \colon y \in \Lambda_\varepsilon\big\} \\
&= \sup \big\{M_{i_0}(y_{\alpha_{i_0,1}},\dots,y_{\alpha_{i_0,d_{i_0}}}) \colon (y_1,\dots,y_p) \in \Lambda_\varepsilon\big\} \\
&= \sup \big\{M_{i_0}(y_{\alpha_{i_0,1}},\dots,y_{\alpha_{i_0,d_{i_0}}}) \colon y_1 \in H_\varepsilon(1),\dots,y_p \in H_\varepsilon(p)\big\} \\
&\le \sup \big\{M_{i_0}(y_{\alpha_{i_0,1}},\dots,y_{\alpha_{i_0,d_{i_0}}}) \colon y_j \in A_\varepsilon,\text{ and } y_i \in B_\varepsilon \text{ for }i \ne j\big\} \\
&\le \sup \big\{M_{i_0}(z_1,z_2,\dots,z_{d_{i_0}}) \colon z_q \in A_\varepsilon,\text{ and } z_k \in B_\varepsilon\text{ for }k \ne q\big\}\\
&= \sup \big\{M_{i_0}(z) \colon z \in B_\varepsilon^{d_{i_0}}, z_q \in A_\varepsilon\big\}
=:\psi(\varepsilon).
}

However, since it is a supremum of a continuous function over a compact set, it attaches its maximum. Thus, for all $\varepsilon>0$, there exists $z^{(\varepsilon)} \in C_\varepsilon:=\{ z\in B_\varepsilon^{d_{i_0}} \colon z_q \in A_\varepsilon\}$ such that $\psi(\varepsilon)=M_{i_0}(z^{(\varepsilon)})$.
Since $M_{i_0}$ is continuous, we obtain that $\psi$ is nondecreasing and continuous.

Let $\bar z$ be any accumulation point of the set $\{z^{(1/n)} \colon n \in \N\}$. Clearly $\bar z$ belongs to the topological limit of $C_\varepsilon$, that is $\bar z \in \{ z \in [\min x,u_{i_0}]^{d_{i_0}} \colon z_q \in [\min x,\rho]\}$.

Since $M_{i_0}$ is a strict mean and $\rho<u_{i_0}$, we get $M_{i_0}(\bar z)<u_{i_0}$. Whence, since $\varphi$ and $\psi$ are nonincreasing and $\varphi\le \psi$ we get
\Eq{*}{
\lim_{\varepsilon \to 0^+}\varphi(\varepsilon) \le \lim_{\varepsilon \to 0^+}\psi(\varepsilon)
=\liminf_{n \to \infty}\psi(\tfrac1n)
=\liminf_{n \to \infty}M_{i_0}(z^{(1/n)})\le M_{i_0}(\bar z)<u_{i_0}.
}
Consequently, there exists $\varepsilon_0$ such that $\varphi(\varepsilon_0)<u_{i_0}$. Then, for all $n \ge \max(n_{\varepsilon_0},m_{\varepsilon_0})$ we have
$\MM_\alpha^n(x) \in \Lambda_{\varepsilon_0}$, that is 
$[\MM_\alpha^n]_{i_0}(x)\le \varphi(\varepsilon_0)$.
Therefore
\Eq{*}{
\limsup_{n \to \infty} [\MM_\alpha^n]_{i_0}(x)\le \varphi(\varepsilon_0)< u_{i_0}=\limsup_{n \to \infty} [\MM_\alpha^n]_{i_0}(x),
}
a contradiction. Thus $\rank(i_0)=0$. 
\medskip 

\noindent {\sc Step 2.2 (conclusion).}
Since $\rank(i_0)=0$ we have $i_0 \in V_0$. Whence,
by \eq{194}, we get 
\Eq{*}{
\limsup_{n \to \infty} [\MM_\alpha^n]_i(x)=u_i \le u_{i_0}=K\circ \pi(x) \text{ for any } i \in V.
}
Analogously, we can show the property
\Eq{*}{
\liminf_{n \to \infty} [\MM_\alpha^n]_i(x) \ge K\circ \pi(x) \text{ for any } i \in V,
}
Whence \eq{3.D} holds with $K_\alpha=K \circ \pi(x)$. If we set $K_\alpha^*:=K$ then, for all $i \in V$ and $x \in I^p$, we have 
\Eq{*}{
\lim_{n \to \infty} [\MM_\alpha^n]_i(x)=K\circ \pi(x)=K_\alpha^* \circ \pi(x)=}
\Eq{*}{=K_\alpha^*(x_i \colon i \in R(G_\alpha))=K_\alpha(x_1,\dots,x_p),
}
which completes the proof.
\end{proof}
\begin{cor}\label{cor:main}
    Let $I \subset \R$ be an interval, $p \in \N$, $\dd \in \N^p$, $\alpha \in \N_p^\dd$, and $\MM=(M_1,\dots,M_p)$ be a $\dd$-averaging mapping on $I$. Assume that $\root(G_\alpha)$ is an ergodic graph, and $M_i$-s are continuous and strict for all $i \in \{1,\dots,p\}$. Define $K_\alpha$ and $\KK_\alpha$ according to Theorem \ref{thm:main}. Then
\begin{enumerate}[{\rm (a)}]
     
    \item \label{2.E} $\KK_\alpha \colon I^p \to I^p$ is $\MM_\alpha$-invariant, that is $\KK_\alpha =\KK_\alpha\circ \MM_\alpha$;
    \item \label{2.F} if $M_1,\dots,M_p$ are nondecreasing with respect to each variable, then so is $K_\alpha$;
    \item \label{2.G} if $I=(0,+\infty)$ and $M_1,\dots,M_p$ are positively homogeneous, then every iterate of $\MM_\alpha$ and $K_\alpha$ are positively homogeneous.
\end{enumerate}
\end{cor}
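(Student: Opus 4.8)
The plan is to obtain all three items from one principle: each of the three structural properties in question -- the invariance identity, coordinatewise monotonicity, and positive homogeneity -- is preserved under composition of mean-type mappings and is stable under pointwise limits. Since Theorem~\ref{thm:main} already supplies the pointwise convergence $\MM_\alpha^n \to \KK_\alpha$ on $I^p$ with $\KK_\alpha=(K_\alpha,\dots,K_\alpha)$ and $\lim_{n\to\infty}[\MM_\alpha^n]_i(x)=K_\alpha(x)$ for every coordinate $i$, each assertion will reduce to checking the relevant property along the iterates and then passing to the limit.

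For \eqref{2.E} I would not reprove convergence but simply invoke the invariance of $K_\alpha$ furnished by Theorem~\ref{thm:main}. By definition of an $\MM_\alpha$-invariant mean, $K_\alpha$ solves $K_\alpha\circ\MM_\alpha=K_\alpha$, and applying this identity in each coordinate of $\KK_\alpha=(K_\alpha,\dots,K_\alpha)$ gives $\KK_\alpha\circ\MM_\alpha=\KK_\alpha$ immediately. Equivalently, one may read off the same conclusion from Proposition~\ref{prop:InvPrinc}, or argue directly by letting $n\to\infty$ in $\MM_\alpha^{n+1}(x)=\MM_\alpha^n\big(\MM_\alpha(x)\big)$, which yields $\KK_\alpha(x)=\KK_\alpha\big(\MM_\alpha(x)\big)$.

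For \eqref{2.F} the key observation is that each coordinate $[\MM_\alpha]_i(x)=M_i(x_{\alpha_{i,1}},\dots,x_{\alpha_{i,d_i}})$ is nondecreasing in every variable as soon as $M_i$ is, so $x\le y$ (coordinatewise) forces $\MM_\alpha(x)\le\MM_\alpha(y)$. An easy induction propagates this to every iterate, giving $\MM_\alpha^n(x)\le\MM_\alpha^n(y)$, and letting $n\to\infty$ in a fixed coordinate yields $K_\alpha(x)\le K_\alpha(y)$, i.e.\ $K_\alpha$ is nondecreasing. Item \eqref{2.G} is parallel, with the functional equation $F(\lambda x)=\lambda F(x)$ replacing the monotonicity inequality: since each $M_i$ is positively homogeneous, so is each coordinate $[\MM_\alpha]_i$, whence $\MM_\alpha(\lambda x)=\lambda\MM_\alpha(x)$ for $\lambda>0$; because a composition of positively homogeneous self-maps is again positively homogeneous, every iterate $\MM_\alpha^n$ inherits the property, and passing to the limit coordinatewise gives positive homogeneity of $K_\alpha$.

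I do not expect a genuine obstacle here; the proof is essentially routine bookkeeping. The only point needing a word of care is the interchange of the limit with each structural property, which is legitimate because monotonicity is a closed (non-strict) inequality and homogeneity a pointwise functional identity, both stable under pointwise convergence. For \eqref{2.G} one should also note that $I=(0,+\infty)$ is a positive cone, so $\lambda x\in I^p$ whenever $x\in I^p$ and $\lambda>0$, keeping all the compositions and iterates well defined.
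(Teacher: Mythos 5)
Your proposal is correct and follows essentially the same route as the paper: item (a) via the observation that the limit of the iterates is unchanged by precomposing with $\MM_\alpha$ (the paper uses exactly the identity $\KK_\alpha(x)=\lim_n\MM_\alpha^n(\MM_\alpha(x))$), and items (b) and (c) by noting that monotonicity and positive homogeneity pass from the $M_i$'s to the coordinates of $\MM_\alpha$, then to every iterate, and finally survive the pointwise limit guaranteed by Theorem~\ref{thm:main}. No gaps; the extra remarks on stability under pointwise limits and on $(0,+\infty)$ being a cone are exactly the right points of care.
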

\begin{proof}
Applying Theorem~\ref{thm:main} twice, for all $x \in I^p$ we have
\Eq{*}{
\KK_\alpha(x)=\lim_{n \to \infty} \MM_\alpha^n (x)=\lim_{n \to \infty} \MM_\alpha^n \big(\MM_\alpha(x)\big)=\KK_\alpha \circ \MM_\alpha(x),
}
which yields \eq{2.E}. 

Properties \eq{2.F} and \eq{2.G} are consequences of Theorem~\ref{thm:main} too. Indeed, if all $M_i$-s are nondecreasing (resp. homogenous) then so are all entries in $\MM_\alpha$. Then all entries in the sequence of iterates $\MM_\alpha^n$ also possess this property. Since it is inherited by the limit procedure, in view of \eq{3.D} we obtain that $K$ is nondecreasing (resp. homogenous). 
\end{proof}

So, we have a complete description if the root is ergodic, an immediate question is implied by this situation: Does something similar is true if the root is not connected? In other words, it has more than one component, more precisely, what happens if the root is not ergodic.

The following theorem says, that the nice characterization (see Theorem \ref{thm:main}) is available if and only if the root is ergodic.
\begin{thm}\label{thm:main2}
        Let $I \subset \R$ be an interval, $p \in \N$, $\dd \in \N^p$, $\alpha \in \N_p^\dd$, and $\MM=(M_1,\dots,M_p)$ be a $\dd$-averaging mapping on $I$ such that all $M_i$-s are continuous and strict. Then there exists the unique $\MM_\alpha$-invariant mean if and only if $\root(G_\alpha)$ is ergodic.
\end{thm}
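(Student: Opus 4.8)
Since the assertion is an equivalence, note first that the implication ``$\root(G_\alpha)$ ergodic $\Rightarrow$ the unique $\MM_\alpha$-invariant mean exists'' is exactly the content of Theorem~\ref{thm:main}, whose statement already asserts uniqueness. The plan is therefore to establish the converse by contraposition: assuming $\root(G_\alpha)$ is \emph{not} ergodic, I would produce a single point $x\in I^p$ whose orbit $(\MM_\alpha^n(x))_{n\in\N}$ fails to converge to any diagonal vector $(c,\dots,c)$. By the Invariance Principle (Proposition~\ref{prop:InvPrinc}), the existence of a unique invariant mean would force $(\MM_\alpha^n)$ to converge pointwise to such a diagonal, so exhibiting one divergent (or off-diagonal) orbit immediately rules out uniqueness. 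The engine of both constructions is that the root is informationally closed: by the observations following Theorem~\ref{T:characterization_of_the_root} no edge of $G_\alpha$ enters the root from outside, and each component of the root is a source of $G_\alpha^{SCC}$, hence receives nothing from the remaining components either. Thus the root coordinates evolve autonomously, regardless of the peer coordinates. Because the root is always nonempty and is irreducible precisely when it consists of a single component, non-ergodicity falls into two cases.

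\textsc{Case 1 (reducible root).} Here $\source(G_\alpha^{SCC})$ contains at least two components $C_1,C_2\subseteq R(G_\alpha)$. I would fix $a\ne b$ in $I$ and take any $x$ with $x_i=a$ for $i\in C_1$ and $x_j=b$ for $j\in C_2$. Since each component sees only its own coordinates and $M_i(a,\dots,a)=a$ by the mean property, every iterate keeps the $C_1$-coordinates frozen at $a$ and the $C_2$-coordinates frozen at $b$. The orbit therefore stays off the diagonal forever, so it cannot converge to any $(c,\dots,c)$, and Proposition~\ref{prop:InvPrinc} denies uniqueness.

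\textsc{Case 2 (irreducible but periodic root, period $d>1$).} I would use the cyclic decomposition of a strongly connected digraph: the root vertices split into classes $C_0,\dots,C_{d-1}$ with every edge running from $C_k$ into $C_{k+1\bmod d}$. Picking $c_0,\dots,c_{d-1}\in I$ with $c_0\ne c_1$ and setting $x_i:=c_k$ for $i\in C_k$ (peer coordinates arbitrary), the mean property on constants shows that one application of $\MM_\alpha$ overwrites the value on $C_{k+1}$ by the old value $c_k$ of $C_k$; that is, the class-values undergo the cyclic shift $(c_0,\dots,c_{d-1})\mapsto(c_{d-1},c_0,\dots,c_{d-2})$. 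Consequently the coordinate of a fixed vertex of $C_0$ runs periodically through $c_0,c_{d-1},\dots,c_1$, and since $c_0\ne c_1$ it has two distinct subsequential limits and diverges. Once more the orbit does not approach a diagonal vector, and Proposition~\ref{prop:InvPrinc} excludes a unique invariant mean.

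Verifying how the frozen and cyclically shifted root configurations evolve is routine, being immediate from the mean property together with the autonomy of the root. I expect the genuine obstacle to be Case~2: one must set up the cyclic-class partition correctly and, crucially, argue that period $d>1$ genuinely prevents convergence -- the hypothesis $d>1$ is exactly what guarantees a nonconstant shift orbit. A minor point to record is that $I$ must be nondegenerate so that the values $a\ne b$ (respectively $c_0\ne c_1$) can be chosen; this is tacit in the setup.
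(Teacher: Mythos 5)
Your proposal is correct and follows essentially the same route as the paper: the forward implication via Theorem~\ref{thm:main}, and the converse by splitting non-ergodicity into a disconnected root and a periodic root, in each case exhibiting an orbit that cannot converge to a diagonal vector and invoking Proposition~\ref{prop:InvPrinc}. The only real difference is in the disconnected case, where the paper builds genuine fixed points of $\MM_\alpha$ by partitioning \emph{all} of $V$ via a maximal successor set, whereas you simply freeze two root components at distinct values and leave the peer coordinates uncontrolled -- a slightly leaner argument that suffices equally well.
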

\begin{proof}
    If $\root(G_\alpha)$ is ergodic then, as an immediate consequence of Theorem~\ref{thm:main}, we obtain that $\MM_\alpha$-invariant mean is uniquely determined.

    For the converse implication, let us take $p \in \N$, $\dd \in \N^p$, and $\alpha \in \N_p^\dd$ so that $\root(G_\alpha)$ is not connected or periodic. Moreover, let $\MM=(M_1,\dots,M_p)$ be an arbitrary $\dd$-averaging mapping on $I$ such that all $M_i$-s are continuous and strict.  This splits our proof into two parts. 

    {\sc Case 1.} If $\root(G_\alpha)$ is not connected then for all $v \in V$ there exists $\bar v \in V$ such that there is no path from $v$ to $\bar v$.
    Let us define, for all $v \in V$, sets 
    \Eq{*}{
    \Succ(v):=\{v\}\cup \{w \in V \colon v \leadsto w\}; \quad \Prec(v):=\{v\} \cup \{w \in V\colon w \leadsto v\}.
    }
    Clearly, for every $v \in V$ we have $\Succ(v) \cap \Prec(\bar v)=\emptyset$. Moreover $\Prec(v) \ne \emptyset$ for all $v \in V$.
    
    Moreover, each vertex has an in-neighbor and
    \Eq{E:NG-}{
    \Succ(w) \supseteq \Succ(v) \text{ for all }w \in N_G^-(v).
    }

    Now let $V_0$ be the maximal element of $\{\Succ(v) \colon v \in V\}$. 
    Then, in view of \eq{E:NG-} we have $\Succ(w)=V_0$ for all $w \in N_G^-(v)$. This implies that there are no edge from $V_0$ to $V \setminus V_0$ (that is, $E \cap (V_0 \times (V \setminus V_0))=\emptyset$).

    Moreover, by simple induction, we have $\Succ(w)=V_0$ for all $w \in \Prec(v_0)$. Since $v \in \Succ(v)$ for all $v\in V$ we get $\Prec(v) \subset V_0$ for all $v \in V_0$ (that is, $E \cap ((V \setminus V_0) \times V_0) =\emptyset$).

    Finally, we have $E \subset V_0^2 \cup (V \setminus V_0)^2$. Therefore every vector $x \in I^p$ of the form 
    \Eq{*}{
    x_i = \begin{cases}
        \gamma & \text{ if } i \in V_0\\
         \delta & \text{ if } i \in V \setminus V_0
    \end{cases}
    }
    (where $\gamma,\delta \in I$) is a fixed point of $\MM_\alpha$. By Proposition~\ref{prop:InvPrinc} we obtain that $\MM_\alpha$-invariant mean is not unique.

    {\sc Case 2.} If $\root(G_\alpha)=(V_0,E_0)$ is nonempty and periodic then there exists $c\ge 2$ and a partition $W_0,\dots W_{c-1}$ of $V_0$ such that 
    $E_0 \subseteq \bigcup_{i=0}^{c-1} W_i \times W_{i+1}$ (we set $W_{c+i}:=W_i$ for all $i \in \Z$). Take $\gamma,\delta \in I$ with $\gamma \ne \delta$ and define $x \in I^p$ as follows
    \Eq{*}{
    x_i=\begin{cases} 
    \gamma & \text{ if } i \in W_0, \\
    \delta  & \text{ if } i \in V_0 \setminus W_0.
    \end{cases}
    }
    But for all $i \in W_k$ means $[\MM_\alpha]_i$ depends only on arguments with indexes $W_{k-1}$. By the simple introduction, for all $n \in \N$ we get
    \Eq{*}{
    [\MM_\alpha^n]_i (x)=\begin{cases} 
    \gamma & \text{ if } i \in W_n,\\
    \delta  & \text{ if } i \in V_0 \setminus W_n.
    \end{cases}
    }
    Whence
    \Eq{*}{
    \lim_{n \to \infty} \max_{i \in \{1,\dots,p\}} [\MM_\alpha^n (x)]_i=\max(\gamma,\delta);\quad
    \lim_{n \to \infty} \min_{i \in \{1,\dots,p\}} [\MM_\alpha^n (x)]_i=\min(\gamma,\delta).
    }
    By Proposition~\ref{prop:InvPrinc}, this yields that the $\MM_\alpha$-invariant mean is not uniquely determined.
\end{proof}

\section{Further examples and discussion}
In the previous paper \cite{Pas23b} it was proved that the $\MM_\alpha$-invaraint mean is uniquely determined whenever each coordinate of $\MM$ is a continuous, strict mean and $G_\alpha$ is an ergodic graph. Now we improved this statement to the case when $\root(G_\alpha)$ is ergodic (Theorem~\ref{thm:main}). Clearly, this generalizes the previous setup since the root of an irreducible graph contains all vertexes. We was also able to show some related properties of this invariant mean (Corollary~\ref{cor:main}). It is also worth mentioning that the ergodicity of the root is unavoidable for the uniqueness of invariant mean (Theorem~\ref{thm:main2}). There are, however, still several interesting phenomena which we show in this section.

We start with the application of our main theorem. This example was already mentioned in \cite{Pas23b} and, in some sense, was the motivation for this research.

\begin{exa}[\!\!\cite{Pas23b}, Example~5]\label{exa:Pas23b-5}

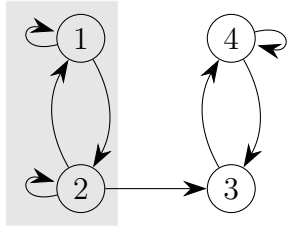
\begin{figure}[h]
\centering
\begin{tikzpicture}[>={Stealth[width=6pt,length=9pt]},
node distance=15mm,auto]
\filldraw [fill=black!10,draw=white] (-1,-0.5) rectangle (0.5,2.5);
\node[state,inner sep=3pt,minimum size=1pt] (b) at (0,0) {$2$};
\node[state,inner sep=3pt,minimum size=1pt] (c) at (2,0) {$3$};
\node[state,inner sep=3pt,minimum size=1pt] (d) at (2,2) {$4$};
\node[state,inner sep=3pt,minimum size=1pt] (a) at (0,2) {$1$};
\path[->] (a) edge [bend left] node {} (b);
\path[->] (a) edge [loop left] node {} (a);
\path[->] (b) edge [bend left] node {} (a);
\path[->] (b) edge [loop left] node {} (b);
\path[->] (b) edge node {} (c);
\path[->] (c) edge [bend left] node {} (d);
\path[->] (d) edge [bend left] node {} (c);
\path[->] (d) edge [loop right] node {} (d);
\end{tikzpicture}
\caption{Graph $G_\alpha$ related to Example~\ref{exa:Pas23b-5}.}
\end{figure}

Let $p=4$, $\dd=(2,2,2,2)$,
\Eq{*}{
\alpha=\big((1,2),(1,2),(2,4),(3,4)\big)\in \N_4^\dd 
\quad\text{ and }\quad
\MM=(\P_{-1},\P_1,\P_{-1},\P_1).
}
Then $\MM_\alpha$ is of the form
\Eq{*}{
\MM_\alpha(x,y,z,t)&=\bigg(\frac{2xy}{x+y},\frac{x+y}2,\frac{2yt}{y+t},\frac{z+t}2\:\bigg).
}

Clearly $\root(G_\alpha)$ is ergodic. Thus, by Theorem~\ref{thm:main}, we obtain that  there exists the unique $\MM_\alpha$-invariant mean $K_\alpha \colon\R_+^4 \to \R_+$, and it is of the form $K_\alpha(x,y,z,t)=K^*_\alpha(x,y)$, where $K^*_\alpha \colon \R_+^2 \to \R_+$. By $K_\alpha \circ \MM_\alpha=K_\alpha$ for all $x,y,z,t \in \R_+$ we obtain, 
\Eq{*}{
K_\alpha^*(x,y)=K_\alpha(x,y,z,t)=K_\alpha \circ \MM_\alpha (x,y,z,t)=}
\Eq{*}{
=K_\alpha \big(\tfrac{2xy}{x+y},\tfrac{x+y}2,\tfrac{2yt}{y+t},\tfrac{z+t}2\big)
=K_\alpha^* \big(\tfrac{2xy}{x+y},\tfrac{x+y}2\big).
}
Now we can use the folklore result stating that the arithmetic-harmonic mean is the geometric mean to obtain $K_\alpha^*(x,y)=\sqrt{xy}$ for $x,y \in \R_+$. Finally
$K_\alpha(x,y,z,t)=K^*_\alpha(x,y)=\sqrt{xy}$.

\end{exa}

Now we justify what happens if $\root(G_\alpha)$ is not connected. Then the iteration of elements in the root can be split into (at least two) independent iteration processes. There appears a natural problem: if convergence of elements in the root yields the convergence in the whole graph. 

In the next example, we show that this is not the case. This example is much different from the previous approaches. Namely, we are going to study the iterations (and invariant means) only for two vectors. Furthermore, in this example, the mean-type mapping contains a mean (denoted by $F$) which is not given explicitly.

\begin{exa}\label{exa:E1}
    Let $I \subset \R$ be an interval and $a,b,c,d \in I$ with $a<b<c<d$. There exists a symmetric, continuous, and strict mean $F \colon I^3 \to I$ such that $F(a,d,b)=c$ and $F(a,d,c)=b$.

    Set $\dd:=(1,1,3,3)$, $\dd$-averaging mapping $\MM:=(\id,\id,F,F)$ (here $\id \colon I \to I$ stands for the identical function) and set 
    \[
    \alpha:=\big((1),(2),(1,2,4),(1,2,3)\big) \in \N_4^{\dd}
    \]
    Then we have
    \Eq{*}{
    \MM_\alpha(x,y,z,t)=\big(x,y,F(x,y,t),F(x,y,z)\big).
    }
    In particular for $v_1:=(a,d,b,c)$ and $v_2:=(a,d,c,b)$ we have $\MM_\alpha(v_i)=v_{3-i}$ ($i \in \N_2$).

    Observe that $R(G_\alpha)=\{1,2\}$ and thus $[\MM_\alpha^n]_i$ is convergent for all $i \in R(G_\alpha)$, although it is (in general) not convergent for indexes which do not belong to the root.

\begin{figure}[h]
\centering
\begin{tikzpicture}[>={Stealth[width=6pt,length=9pt]},
node distance=15mm,auto]
\filldraw [fill=black!10,draw=white] (-1,-0.5) rectangle (0.5,2.5);
\node[state,inner sep=3pt,minimum size=1pt] (b) at (0,0) {$2$};
\node[state,inner sep=3pt,minimum size=1pt] (c) at (2,0) {$3$};
\node[state,inner sep=3pt,minimum size=1pt] (d) at (2,2) {$4$};
\node[state,inner sep=3pt,minimum size=1pt] (a) at (0,2) {$1$};
\path[->] (a) edge [loop left] node {} (a);
\path[->] (b) edge [loop left] node {} (b);
\path[->] (a) edge node {} (c);
\path[->] (b) edge node {} (c);
\path[->] (d) edge node {} (c);
\path[->] (a) edge node {} (d);
\path[->] (b) edge node {} (d);
\path[->] (c) edge node {} (d);
\end{tikzpicture}
\caption{Graph $G_\alpha$ related to Example~\ref{exa:E1}.}
\end{figure}

Now we define means $L_i,U_i \colon I^4\to I$ ($i \in \N_4$) by
\Eq{*}{
L_i(v)=\liminf_{n \to \infty} [\MM_\alpha^n(v)]_i\qquad \text{ and }\qquad U_i(v)=\limsup_{n \to \infty} [\MM_\alpha^n(v)]_i.
}
Clearly, for all $i \in \N_4$, we have  $L_i\circ \MM_\alpha=L_i$ and $U_i\circ \MM_\alpha=U_i$, that is $L_i$-s and $U_i$-s are $\MM_\alpha$-invariant. For vectors $v_i$ ($i \in \N_2$) these means are 
\Eq{*}{
L_1(v_i)=U_1(v_i)=a,\hskip13.3mm&\quad L_2(v_i)=U_2(v_i)=d,\\
L_3(v_i)=L_4(v_i)=\min(b,c),&\quad U_3(v_i)=U_4(v_i)=\max(b,c).
}

\end{exa}

\subsection{Markov chains and weighted arithmetic means, when the root is not ergodic}

Let us assumed that the involved means are weighted arithmetic means (trivial weights, so projections are allowed). Then the iteration process is nothing else but a discrete Markov chain. Even, if the root is not ergodic, the limit of the iteration is unique, however there is no unique invariant mean in this case (see Theorem \ref{thm:main2}).

Observe that if all means are weighted arithmetic means then the iteration process can be completely characterized by a right stochastic matrix $A$ which describes the share of the individual opinion in the next step.

Let's consider the following numerical example.
\begin{exa}\label{exa:WQA}
    Let $d=4$ and the $d$-averaging mapping $M\colon\R^4\to\R^4$ given by
    \[
M(x_1,x_2,x_3,x_4)=\left(x_1,x_2,\frac{x_1+2x_2+3x_3+3x_4}{9},\frac{2x_1+x_2+x_3+2x_4}{6}\right).
    \]
    Then the corresponding discrete Markov chain can be presented as the matrix
    \[
    A:=\begin{bmatrix}
    1&0&0&0\\
    0&1&0&0\\
    \frac{1}{9}&\frac{2}{9}&\frac{3}{9}&\frac{3}{9}\\[2mm]
    \frac{2}{6}&\frac{1}{6}&\frac{1}{6}&\frac{2}{6}
    \end{bmatrix}.
    \]
    Then we can see that $M(x)=Ax$ for all $x \in \R^4$. 
    The limit of the iteration process is
    \[
    \lim\limits_{n\to\infty}A^n=\begin{bmatrix}
        1&0&0&0\\
    0&1&0&0\\
    \frac{10}{21}&\frac{11}{21}&0&0\\[2mm]
    \frac{13}{21}&\frac{8}{21}&0&0
    \end{bmatrix}
    \]
    This means that the opinion of the users will be the mixtures of the two components of the root with weights $\tfrac{10}{21},\tfrac{11}{21}$ and $\tfrac{13}{21},\tfrac{8}{21}$ respectively.

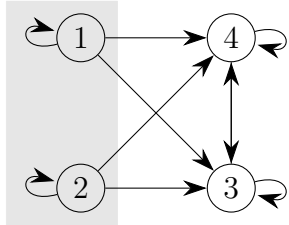
\begin{figure}[h]
\centering
\begin{tikzpicture}[>={Stealth[width=6pt,length=9pt]},
node distance=15mm,auto]
\filldraw [fill=black!10,draw=white] (-1,-0.5) rectangle (0.5,2.5);
\node[state,inner sep=3pt,minimum size=1pt] (b) at (0,0) {$2$};
\node[state,inner sep=3pt,minimum size=1pt] (c) at (2,0) {$3$};
\node[state,inner sep=3pt,minimum size=1pt] (d) at (2,2) {$4$};
\node[state,inner sep=3pt,minimum size=1pt] (a) at (0,2) {$1$};
\path[->] (a) edge [loop left] node {} (a);
\path[->] (b) edge [loop left] node {} (b);
\path[->] (c) edge [loop right] node {} (c);
\path[->] (d) edge [loop right] node {} (d);
\path[->] (a) edge node {} (c);
\path[->] (b) edge node {} (c);
\path[->] (d) edge node {} (c);
\path[->] (a) edge node {} (d);
\path[->] (b) edge node {} (d);
\path[->] (c) edge node {} (d);
\end{tikzpicture}
\caption{Graph $G_\alpha$ related to Example~\ref{exa:E1}.}
\end{figure}

\end{exa}

\section{Conclusion and further research}
We gave a new model of information spreading on networks. The key concept in our investigation was the notion of invariant means of certain averaging mappings. The roots in the network has a special role as we have proved. Actually, the accepted narrative of the network depends on only the opinion of root elements.

According to our best knowledge this model of spreading information on networks and information bubbles is brand new in the literature. So, several open problems can be posed concerning this new approach. 

Let us mention just a few. One of the most important questions in our opinion is the better understanding of the case when the root is not ergodic. There is no unique invariant mean in this case (see Theorem~\ref{thm:main2}), however the iteration process results a unique limit, which can be considered as a narrative at the end. It is not clear what is the role of the other invariant means in this case.

A good start for the investigation of this would be $d$-averaging mappings containing only weighted arithmetic means (see Example \ref{exa:WQA}). 

Another important question, which can simplify further investigations, is the following. If the root contains $k$ different components with $\alpha_i,\ i=1,\dots,k$ variables. And the corresponding invariant means are $K_1,\dots, K_k$, then do we get the same situation or not, if we substitute the aggregation process of root elements with the corresponding invariant mean at the very beginning?

Furthermore, based on Example~\ref{exa:E1}, we know that the convergence on sequence of iterates on the root set (in general) does not imply that it is convergent on remaining elements. On the other hand, we conjecture that it would be the case under some additional assumptions. For example, if we additionally assume that all means are monotone in their parameters.

Another possible direction to make the model more realistic is to assume that the users and the influencers change their aggregation process in time. A possible approach to grab this is to use random means (defined in \cite{BarBur22a}).

Finally, it is not known how a modification of the mean in one vertex impacts to whole iteration process. More precisely, is it true that if a single individual slightly changes the way of aggregating the information then it will not have a big impact to the remaining part of the graph.
\section{Acknowledgement}

P. Burai acknowledges the support of the Hungarian National Research Development and Innovation Office (NKFIH) through the grant TKP2021-NVA-02.


\end{document}